\newtheorem{lemma}{Lemma}
\begin{document}
\bibliographystyle{IEEEtran}

\title{STAR-RIS-aided NOMA for Secured xURLLC}

\author{Lulu Song, Di Zhang,~\IEEEmembership{Senior Member,~IEEE}, Shaobo Jia,~\IEEEmembership{Member,~IEEE}, Pengcheng Zhu,~\IEEEmembership{Member,~IEEE}, Yonghui Li,~\IEEEmembership{Fellow,~IEEE}
  \thanks{Copyright (c) 20xx IEEE. This is a pre-print version of article accepted by IEEE Transactions on Vehicular Technology. Personal use of this material is permitted. However, permission to use this material for any other purposes must be obtained from the IEEE by sending a request to pubs-permissions@ieee.org.}
  \thanks{This work was supported in part by the National Natural Science Foundation of China under Grant U22A2001, 62301502, and the Henan Natural Science Foundation for Excellent Young Scholar under Grant 242300421169. (Corresponding author: Di Zhang).}
  \thanks{Lulu Song, Di Zhang and Shaobo Jia are with the School of Electrical and Information Engineering, Zhengzhou University, and  also with the School of Electrical and Information Engineering, the Henan International Joint Laboratory of Intelligent Health Information System, the National Telemedicine center, the National Engineering Laboratory for Internet Medical Systems and Applications, Zhengzhou University, 450001, China (E-mail: lulu\_song@gs.zzu.edu.cn, dr.di.zhang@ieee.org, ieshaobojia@zzu.edu.cn).} 
 
  \thanks{Pengcheng Zhu is with the National Mobile Communications Research Laboratory, Southeast University, Nanjing 210096, China (E-mail: p.zhu@seu.edu.cn).}
  \thanks{Yonghui Li is with the School of Electrical and Information Engineering, University of Sydney, Sydney, NSW 2006, Australia (E-mail: yonghui.li@sydney.edu.au).}
}



\maketitle

\begin{abstract}
Short packet-based advanced Internet of things (A-IoT) calls for not only the next generation of ultra-reliable low-latency communications (xURLLC) but also highly secured communications. In this paper, we aim to address this objective by developing a  non-orthogonal multiple access (NOMA) system with untrusted user. There exist two key problems: The confidential/private message for the far user will be exposed to the untrusted near user with successful SIC; The restrictive trade-off among reliability, security and latency poses a great challenge in achieving secured xURLLC. In order to solve these issues, we introduce simultaneous transmitting and reflecting reconfigurable intelligent surface (STAR-RIS), which provides additional degree of freedom to enable a secure and fair decoding order and achieve a desired trade-off among reliability, security and latency. To fully reveal the trade-off among reliability, security and latency, we characterize the reliability and security via decoding error probabilities. A leakage probability minimization problem is modeled to optimize the passive beamforming, power allocation and blocklength subject to secure successive interference cancellation (SIC) order, reliability and latency constraints. To solve this complex problem, we explore its intrinsic properties and propose an algorithm based on majorization minimization (MM) and alternative optimization (AO). Simulation results demonstrate the validness of our study in this paper.
\end{abstract}

\begin{IEEEkeywords}
  Short packet communication, security, reliability, latency, reconfigurable intelligent surface, non-orthogonal multiple access.
\end{IEEEkeywords}

\IEEEpeerreviewmaketitle

\section{Introduction}
\IEEEPARstart{S}{ix} generation (6G) is anticipated to boost the proliferation of mission-critical Internet of Things (IoT) applications like industrial IoT and emergency rescue, which will be featured by short packet communications (SPC)\cite{ref1}. Besides stricter requirements on connection density (${10^6}$-${10^8}$ devices/${\rm{k}}{{\rm{m}}^2}$), reliability (${10^{ - 5}}$-${10^{ - 7}}$) and latency (0.1-1ms), security and privacy issues arising from ubiquitous connectivity are emerging as the new focuses \cite{ref2}. Moreover, the reliability and security performance of SPC can be significantly degraded due to the use of finite blocklength codes \cite{ref3,ref4}. Advanced technologies to support next-generation ultra-reliable low-latency communications (xURLLC) and secure transmissions in SPC-based IoT applications are thus emerging as a critical issue.

In literature, non-orthogonal multiple access (NOMA)-SPC has emerged as a promising technology to provide enhanced spectrum efficiency \cite{ref5}, connectivity and fairness \cite{ref7}, and reduced latency \cite{ref6}. Reliability performance of NOMA-SPC can be further improved by sparse vector coding-based multi-carrier technologies\cite{ref8}. On the other hand, alongside xURLLC, information security and privacy protection are surfacing as pivotal concerns in the evolution towards 6G. As a attractive complement to upper-layer encryption, physical layer security exploits the physical security attributes and has the advantages of lower complexity and higher flexibility\cite{ref91}.

In light of this, some works have been done to investigate the security in NOMA-SPC system. In \cite{ref9}, the secure energy efficiency of an uplink NOMA-SPC network in presence of an eavesdropper was maximized. In \cite{ref10}, the security performance of a downlink NOMA-SPC system with diversified requirements users was analyzed. However, most works on NOMA-SPC focused on external eavesdropping, while overlooking the potential risk of information leakage posed by untrusted internal user, especially in dynamic/heterogeneous networks. To ensure both spectrum efficiency and confidentiality/privacy, it is reasonable to consider a hostile but realistic NOMA scenario where some users want to safeguard their confidential/private information and treat other users as untrusted internal eavesdroppers. Although the untrusted NOMA has been considered in \cite{ref11,ref12}, it only assumed that the untrusted user was farther away from the source. The even worse scenario with untrusted near user is left out, where the confidential/private message for the far user can be overheard by the untrusted near user with successful SIC.

Besides, the interplay among reliability, security and transmission delay of SPC is important in meeting diverse and customized requirements of secured xURLLC. Focusing on single transmission with specific packet size, the authors in \cite{ref13} demonstrated that the the trade-off between reliability and security cannot be sufficiently represented by the metrics with given reliability/security constraint. Therefore, the leakage-failure probability was proposed to address the trade-off in the classic three-node setup. However, the interplay among reliability, security and transmission delay in NOMA-SPC has not yet been investigated. Moreover, due to the complexity and uncontrollability of the propagation environment, the interplay may present a formidable challenge in concurrently achieving satisfactory performances across all three aspects.

Fortunately, the electromagnetic environment can be proactively adjusted with very low consumption by reconfigurable intelligent surface (RIS), a prospective technology in 6G. When used in SPC, it can substantially enhance the performance such as achievable secrecy rate and block error rate \cite{ref14,ref15}. The benefits of using RIS in untrusted NOMA-SPC are two folds: By constructively/destructively reconfiguring the channel between the source and the secure/untrusted user, the effective channel gain of the secure user can be ensured to be relatively strong, and thus a secure and fair SIC order can be guaranteed; The ability to differentiate the channel gains provides additional degrees of design freedom, and thus the security performance can be enhanced while concurrently achieving better performance in both reliability and transmission delay. In other words, a more desired trade-off among reliability, security and latency can be achieved.

Motivated by these concerns, we investigate the secured xURLLC in NOMA-SPC system with untrusted near user. To reveal the trade-off among reliability, security and transmission delay of single transmission with specific packet size, decoding error probabilities are adopted to characterize the reliability and security performance \cite{ref13,ref16}. To enhance security while fulfilling stricter reliability and latency requirements, a novel RIS named simultaneous transmitting and reflecting RIS (STAR-RIS) \cite{ref17} is introduced to ensure the secure SIC order and provide higher freedom so that a more desired trade-off among reliability, security and latency can be achieved. Then the leakage probability is minimized under reliability and latency constraints. Finally, we conduct simulations to evaluate the proposed scheme in terms of the reliability-security-latency and convergence performance.

\section{System Model}
We consider a STAR-RIS-aided downlink NOMA-SPC system with untrusted user in the scenario of mission-critical IoT. As shown in Fig. \ref{fig1}, a single-antenna source (S) transmits messages to a pair of single-antenna NOMA users ${U_i},i {\in}\left\{ {c,s} \right\}$ with the assistance of a STAR-RIS (R) equipped with $N{=}{N_v}{N_h}$ elements, where ${N_v}$ and ${N_h}$ are the numbers of the elements along rows and columns. The confidential/private signal $s_s$ for user $U_s$ and the signal $s_c$ for user $U_c$ are superimposed. To prevent information leakage, We treat $U_c$ as untrusted user. Consider single transmission, where the confidential/private information of $d_s$ bits intended for $U_s$ and the information of $d_c$ bits intended for $U_c$ are transmitted within a shared blocklength of $m$ channel uses.
\begin{figure}[!t]
  \centering
  \includegraphics[width=1.6in]{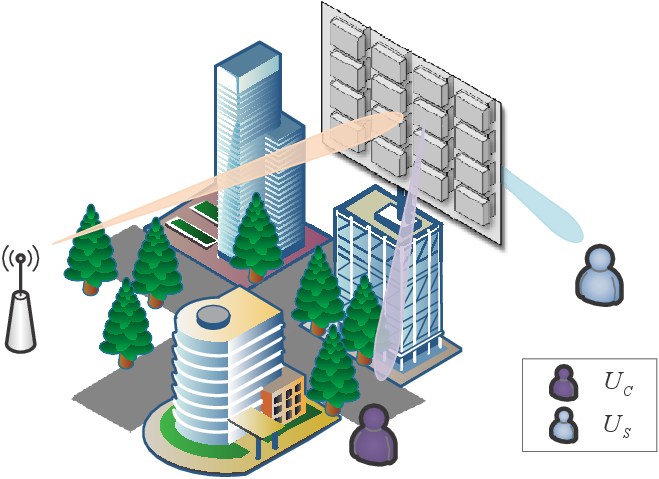}
  \caption{STAR-RIS-aided downlink NOMA}
  \label{fig1}
  \vspace{-0.3cm}
\end{figure}
Specifically, the superimposed transmitted signal is
\begin{equation}
  \label{1}
  \begin{split}
    x = \sqrt {{a_c}P} {s_c} + \sqrt {{a_s}P} {s_s},
  \end{split}
\end{equation}
where $P$ is the transmit power at S, ${a_i},i {\in} \left\{ {c,s} \right\}$ is the power allocation factor for ${U_i}$. Without loss of generality, we assume that ${U_c}$ is located in the reflection area and ${U_s}$ in the transmission area. The 3D coordinates of S, R, ${U_c}$ and ${U_s}$ are ${{\bf{c}}_j} {=} {\left[ {c_j^x,c_j^y,c_j^z} \right]^T},j {\in} \left\{ {S,R,{U_c},{U_s}} \right\}$. The direct links from S to ${U_i}$ are omitted here due to unfavorable obstacles. Then the received signals reflected/transmitted by STAR-RIS at ${U_i}$ can be given as
\begin{equation}
  \label{2}
  \begin{split}
    {y_i} = {\bf{v}}_i{{\bf{h}}_i}x + {n_i},
  \end{split}
\end{equation}
where ${{\bf{v}}_i} {=} {\left[ {{\mu _{i,1}}{e^{j{\theta _{i,1}}}}, \ldots ,{\mu _{i,N}}{e^{j{\theta _{i,N}}}}} \right]}$ characterizes the reflection/transmission property of STAR-RIS, ${\mu _{i,n}}{\in} \left[ {0,1} \right]$ and ${\theta _{i,n}} {\in} \left[ {0,2\pi } \right)$ are the reflection/transmission amplitude and phase shift of the $n$th element, respectively. Energy splitting protocol is adopted here due to its higher flexibility than mode switching and time switching protocol. According to the law of energy conservation, ${\mu _{i,n}}$ satisfies $\mu _{c,n}^2 {+} \mu _{s,n}^2 {\le} 1$ \cite{ref18,ref19}. In addition, ${{\bf{h}}_i} {=} \text{diag}\left( {{{\bf{h}}_{Ri}}} \right){{\bf{h}}_{SR}}$, where ${{\bf{h}}_{SR}}, {{\bf{h}}_{Ri}} {\in} {\mathbb{C}^{N \times 1}}$ represent the channels between S→R and R→${U_i}$. The additive white Gaussian noise at ${U_i}$ is denoted by ${n_i} {\sim} {\cal C}{\cal N}(0,\sigma _i^2)$. Under the Rician channel model, we have
\begin{equation}
  \label{3}
  \begin{split}
    {{\bf{h}}_{SR}} = \sqrt {\rho d_{SR}^{ - {\alpha _1}}} \Big( {\sqrt {\frac{K}{{1 + K}}} {\bf{h}}_{SR}^{LoS} + \sqrt {\frac{1}{{1 + K}}} {\bf{h}}_{SR}^{NLoS}} \Big),
  \end{split}
\end{equation}
\begin{equation}
  \label{4}
  \begin{split}
    {{\bf{h}}_{Ri}} = \sqrt {\rho d_{Ri}^{ - {\alpha _2}}} \Big( {\sqrt {\frac{K}{{1 + K}}} {\bf{h}}_{Ri}^{LoS} + \sqrt {\frac{1}{{1 + K}}} {\bf{h}}_{Ri}^{NLoS}} \Big),
  \end{split}
\end{equation}
where ${d_{SR}}$ and ${d_{Ri}}$ are the distance between two nodes, ${\alpha _1}$ and ${\alpha _2}$ are the path loss exponents, $\rho$ and $K$ denote the path loss at unit distance and the Rician factor. The line-of-sight components are defined as ${\bf{h}}_{SR}^{LoS} = {\bf{a}}\left( {{\phi _{SR}},{\varphi _{SR}}} \right)$ and ${\bf{h}}_{Ri}^{LoS} = {\bf{a}}\left( {{\phi _{Ri}},{\varphi _{Ri}}} \right)$, where ${\bf{a}}(\phi ,\varphi ) = \left[ {1, \ldots ,{e^{ - j\frac{{2\pi l}}{\lambda }\left( {{N_v} - 1} \right)\sin \phi \cos \varphi }}} \right]^T \otimes \left[ 1, \ldots,{e^{ - j\frac{{2\pi l}}{\lambda }\left( {{N_h} - 1} \right)\sin \phi \sin \varphi }} \right]^T$ is the array response vector, $l$ and $\lambda$ denote the inter-element distance and the signal wavelength, $\phi $ and $\varphi $ denote the elevation angle-of-arrival (AoA)/angle-of-departure (AoD) and the azimuth AoA/AoD, i.e., ${\phi _{SR}} = \arcsin \big( {\frac{{c_S^z - c_R^z}}{{\left\| {{{\bf{c}}_S} - {{\bf{c}}_R}} \right\|}}} \big)$, ${\phi _{Ri}} = \arcsin \big( {\frac{{c_{{U_i}}^z - c_R^z}}{{\left\| {{{\bf{c}}_{{U_i}}} - {{\bf{c}}_R}} \right\|}}} \big)$, ${\varphi _{SR}} = \arccos \big( {\frac{{c_S^x - c_R^x}}{{{{\left\| {{{\bf{c}}_S} - {{\bf{c}}_R}} \right\|}_{1:2}}}}} \big)$, ${\varphi _{Ri}} = \arccos \big( {\frac{{c_{{U_i}}^x - c_R^x}}{{{{\left\| {{{\bf{c}}_{{U_i}}} - {{\bf{c}}_R}} \right\|}_{1:2}}}}} \big)$, ${\bf{h}}_{SR}^{NLoS}$ and ${\bf{h}}_{Ri}^{NLoS}$ denote the non-line-of-sight components with circular symmetric complex Gaussian random variables.

\section{Reliability-Security-Latency Performance}
Prior study in \cite{ref13} indicates that a trade-off between reliability and security can be represented by the decoding error probabilities. Specifically, the unreliability is measured by the probability of the event that ${U_i}$ decodes ${s_i}$ incorrectly, i.e., the error probability $\varepsilon_{ii}$. The insecurity is measured by the probability of the event that ${U_c}$ decodes ${s_s}$ successfully, i.e., the leakage probability ${\delta _{sc}}$. We will follow this to analyze the reliability-security-latency performance.

For the sake of security, ${s_c}$ needs to be first decoded. Otherwise, the leakage probability will be high because the reliability at ${U_c}$ is on the premise of successful decoding of ${s_s}$. Thanks to the channel reconstruction capability of STAR-RIS, the effective channel gain for ${U_s}$ can be enhanced to be stronger, and thus the secure SIC order can be guaranteed regardless of whether ${U_s}$ is near or far from the source.

For a given packet size $d$, the block error probability is \cite{ref3}
\begin{equation}
  \label{5}
  \begin{split}
    \varepsilon \left( {\gamma ,m, d} \right) = Q\Big( {\sqrt {\frac{m}{{V\left( \gamma  \right)}}} \big( {C\left( \gamma  \right) - \frac{d}{m}} \big)\ln 2} \Big),
  \end{split}
\end{equation}
where $\gamma$, $C\left( \gamma  \right) {=} {\log _2}\left( {1 + \gamma } \right)$, $V\left( \gamma  \right) {=} 1 {-} {\left( {1 + \gamma } \right)^{-2}}$ and $Q\left( x \right) {=} \int_x^\infty  {\frac{1}{{\sqrt {2\pi } }}} {e^{ - {t^2}/2}}dt$ are the SNR, Shannon’s capacity, the channel dispersion and Gaussian Q function, respectively. 

In the first step of SIC, the signal-to-noise-and interference ratio (SINR) to decode ${s_c}$ at ${U_i}$ is 
\begin{equation}
  \label{6}
  \begin{split}
    {\gamma _{ci}} = \frac{{{a_c}P{{\left| {{\bf{v}}_i^H{{\bf{h}}_i}} \right|}^2}}}{{{a_s}P{{\left| {{\bf{v}}_i^H{{\bf{h}}_i}} \right|}^2} + \sigma _i^2}},i \in \left\{ {c,s} \right\}.
  \end{split}
\end{equation}
Then the decoding error probability of ${s_c}$ at ${U_i}$ can be obtained by ${\varepsilon _{ci}} = \varepsilon \left( {{\gamma _{ci}},m,d_c} \right)$. If ${s_c}$ is canceled successfully, the SINR to decode ${s_s}$ at ${U_i}$ can be denoted by
\begin{equation}
  \label{7}
  \begin{split}
    {\gamma _{si}} = \frac{{{a_s}P{{\left| {{\bf{v}}_i^H{{\bf{h}}_i}} \right|}^2}}}{{\sigma _i^2}},
  \end{split}
\end{equation}
and the conditional decoding error probability of ${s_s}$ at ${U_i}$ will be ${\bar \varepsilon _{si}} = \varepsilon \left( {{\gamma _{si}},m,d_s} \right)$. The effective decoding error probability of ${s_s}$ at ${U_i}$ can thus be given as
\begin{equation}
  \label{8}
  \begin{split}
    {\varepsilon _{si}} = \left( {1 - {\varepsilon _{ci}}} \right){\bar \varepsilon _{si}} + {\varepsilon _{ci}}.
  \end{split}
\end{equation}
Therefore, the unreliability can be characterized by the probability of the event that $U_c$ decodes $s_c$ incorrectly in the first SIC step, i.e., the error probability ${\varepsilon _{cc}}$, and the probability of the event that $U_s$ decodes $s_c$ incorrectly in the first SIC step or decodes $s_s$ incorrectly in the second step, i.e., the error probability ${\varepsilon _{ss}}$. The insecurity can be characterized by the probability of the event that $U_c$ decodes $s_c$ in the first SIC step and $s_s$ in the second step correctly, i.e., the leakage probability ${\delta _{sc}} = 1 - {\varepsilon _{sc}}$. Besides, the requirement on latency imposes constraint on the transmission delay per transmission, which can be defined as
\begin{equation}
  \label{9}
  \begin{split}
    {T} = \frac{m}{B},
  \end{split}
\end{equation}
where $B$ is the bandwidth. 

It is evident that there exists interplay among ${\varepsilon _{cc}}$, ${\varepsilon _{ss}}$, ${\delta _{sc}}$ and $T$. Specifically, since $\frac{{\partial {\varepsilon _{si}}}}{{\partial {\varepsilon _{ci}}}} = 1 - {\bar \varepsilon _{si}} \ge 0$, $\frac{{\partial {\varepsilon _{si}}}}{{\partial {{\bar \varepsilon }_{si}}}} = 1 - {\varepsilon _{ci}} \ge 0$, and $\frac{{\partial \varepsilon }}{{\partial m }} \le 0$ holds for both ${\varepsilon _{ci}}$ and ${\bar \varepsilon _{si}}$, we have $\frac{{\partial {\varepsilon _{cc}}}}{{\partial m}} \le 0$, $\frac{{\partial {\varepsilon _{ss}}}}{{\partial m}} = \frac{{\partial {\varepsilon _{ss}}}}{{\partial {\varepsilon _{cs}}}}\frac{{\partial {\varepsilon _{cs}}}}{{\partial m}} + \frac{{\partial {\varepsilon _{ss}}}}{{\partial {{\bar \varepsilon }_{ss}}}}\frac{{\partial {{\bar \varepsilon }_{ss}}}}{{\partial m}} \le 0$, $\frac{{\partial {\delta _{sc}}}}{{\partial m}} = -\frac{{\partial {\varepsilon _{sc}}}}{{\partial {\varepsilon _{cc}}}}\frac{{\partial {\varepsilon _{cc}}}}{{\partial m}} - \frac{{\partial {\varepsilon _{sc}}}}{{\partial {{\bar \varepsilon }_{sc}}}}\frac{{\partial {{\bar \varepsilon }_{sc}}}}{{\partial m}} \ge 0$ and $\frac{{\partial {T}}}{{\partial m}} = \frac{1}{B} \ge 0$. Therefore, ${\varepsilon _{cc}}$, ${\varepsilon _{ss}}$ decrease monotonically while ${\delta _{sc}}$, $T$ increase monotonically with respect to $m$.

With respect to ${a_c}$ and ${a_s}$, and since $\frac{{\partial \varepsilon }}{{\partial \gamma }} \le 0$ holds for both ${\varepsilon _{ci}}$ and ${\bar \varepsilon _{si}}$, we will have $\frac{{\partial {\varepsilon _{ci}}}}{{\partial a_c}} = \frac{{\partial {\varepsilon _{ci}}}}{{\partial {\gamma _{ci}}}}\frac{{\partial {\gamma _{ci}}}}{{\partial a_c}} \le 0$, $\frac{{\partial {\varepsilon _{ci}}}}{{\partial a_s}} = \frac{{\partial {\varepsilon _{ci}}}}{{\partial {\gamma _{ci}}}}\frac{{\partial {\gamma _{ci}}}}{{\partial a_s}} \ge 0$, $\frac{{\partial {\bar\varepsilon _{si}}}}{{\partial a_s}} = \frac{{\partial {\bar\varepsilon _{si}}}}{{\partial {\gamma _{si}}}}\frac{{\partial {\gamma _{si}}}}{{\partial a_s}} \le 0$. Therefore, allocating more power to $s_c$ and less power to $s_s$ results in lower $\varepsilon _{ci}$ and higher $\bar\varepsilon _{si}$. Even though a $\bar\varepsilon _{si}$ at a very low level facilitates a small $\delta _{sc}$, however, it leads to an unacceptable increase in $\varepsilon _{ss}$ at the same time.

Accordingly, it can be seen that there exists trade-off among ${\varepsilon _{cc}}$, ${\varepsilon _{ss}}$, ${\delta _{sc}}$ and $T$. For a clear illustration of the trade-off without STAR-RIS design, an explicit presentation of $({\varepsilon _{cc}}, {\varepsilon _{ss}}, {\delta _{sc}}, T)$ corresponding to different $(m, a_c)$ (assume $a_c+a_s=1$) is depicted in the form of 4D plot in fig. 2, where the fourth dimension is marked with color. In this work we aim at achieving satisfactory ${\varepsilon _{cc}}$, ${\varepsilon _{ss}}$, ${\delta _{sc}}$ and $T$ to fulfill the requirements of secured xURLLC. However, it can be seen clearly from fig. 2 that without STAR-RIS design, satisfactory ${\varepsilon _{cc}}$, ${\varepsilon _{ss}}$, ${\delta _{sc}}$ and $T$ cannot be achieved concurrently.

The analysis above indicates the significance and necessity of the additional degrees of design freedom provided by STAR-RIS to achieve a more desired trade-off. In the following, we aim at achieving higher security while fulfilling stringent reliability and latency requirements with the assistance of STAR-RIS through joint optimization.

\begin{figure}
    \centering
    \includegraphics[width=0.5\linewidth]{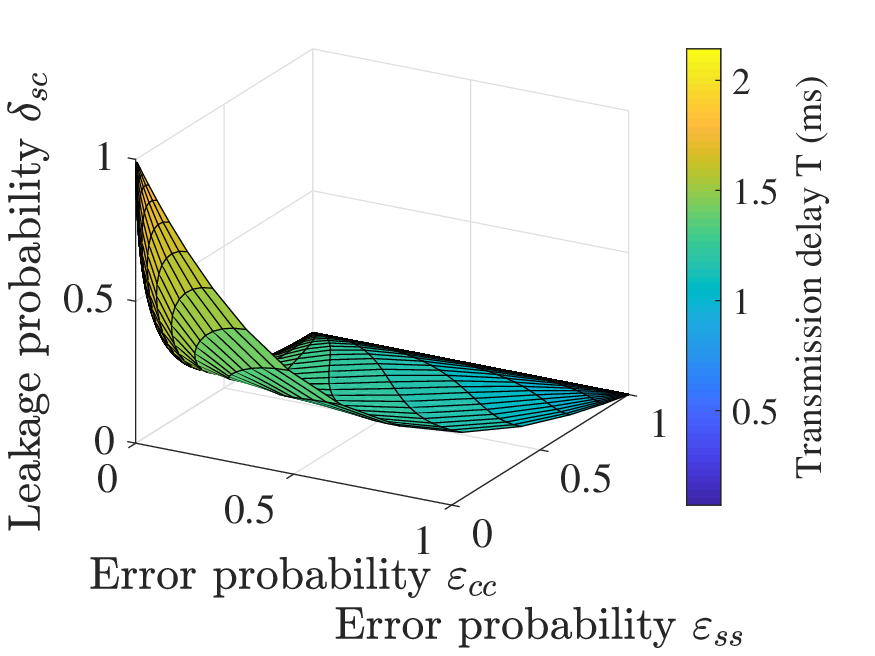}
    \caption{${\varepsilon _{cc}}$, ${\varepsilon _{ss}}$, ${\delta _{sc}}$ and $T$ without STAR-RIS design}
    \label{fig2}
    \vspace{-0.3cm}
\end{figure}

\section{Problem formulation and reformulation}
To maximize security while fulfilling customized latency and reliability requirements, an optimization problem is formulated in this section. Specifically, the passive beamforming, the blocklength and the power allocation are jointly optimized to minimize the leakage probability subject to constraints of reliability, transmission delay, SIC order, transmit power and STAR-RIS elements energy. Mathematically, the optimization problem can be given as
\begin{subequations}
  \begin{align}
     & {\rm{P1}}: \mathop {\min }\limits_{{{\bf{v}}_i},m,{a_i}} {\delta _{sc}} \label{10a} \\ 
     & \mbox{s.t.}\ {\varepsilon _{cc}} \le {\varepsilon _c},\label{10b} \\
     & \quad\ {\varepsilon _{ss}} \le {\varepsilon _s},\label{10c} \\
     & \quad\ {a_c} > {a_s},\label{10d}   \\
     & \quad\ {a_c} + {a_s} \le 1,\label{10e}\\
     & \quad\ {\left| {{{\bf{v}}_{c,n}}} \right|^2} + {\left| {{{\bf{v}}_{s,n}}} \right|^2} \le 1,\forall n \in \left\{ {1, \ldots ,N} \right\}, \label{10f}\\
     & \quad\ T \le {T_{max}},\ m \in \mathbb{N}^+,  \label{10g}
  \end{align}
\end{subequations}
where (\ref{10b}) and (\ref{10c}) ensure that the error probability at ${U_i}$ is no greater than the threshold ${\varepsilon _i}$. Constraints (\ref{10d}) and (\ref{10e}) guarantee the secure SIC order and a total transmit power within $P$, (\ref{10f}) is the energy-splitting constraint for the two modes of STAR-RIS elements, ${T_{max}}$ in constraint (\ref{10g}) represents the maximum tolerable transmission delay.

Note that P1 is intractable due to the complex expressions. Besides, the widely used approximation methods \cite{ref16} are incapable due to their inaccuracy. To solve it, we reformulate P1 by exploring its intrinsic properties.

\begin{lemma}
    \label{lemma2}
    Constraints (\ref{10b}) and (\ref{10e}) hold with equality at the optimal solution.
\end{lemma}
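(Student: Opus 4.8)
The plan is to argue each of the two constraints separately by contradiction, using only the monotonicity facts established above together with the identity $\delta_{sc}=(1-\varepsilon_{cc})(1-\bar\varepsilon_{sc})$, which follows at once from $\delta_{sc}=1-\varepsilon_{sc}$ and \eqref{8}. Every feasible point has $\varepsilon_{cc}\le\varepsilon_c<1$ and $0\le\bar\varepsilon_{sc}<1$ (the latter because feasibility of \eqref{10c} forces $a_s>0$, hence $\gamma_{sc}>0$), so the objective is strictly decreasing in both $\varepsilon_{cc}$ and $\bar\varepsilon_{sc}$ over the feasible range. The whole proof then reduces to exhibiting, whenever \eqref{10b} or \eqref{10e} is slack, a feasible perturbation that strictly increases $\varepsilon_{cc}$ or $\bar\varepsilon_{sc}$.

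For \eqref{10b}: suppose an optimal point has $\varepsilon_{cc}<\varepsilon_c$. I would replace $\mathbf{v}_c$ by $\kappa\mathbf{v}_c$ with $\kappa$ slightly below $1$. This shrinks the effective gain $|\mathbf{v}_c^H\mathbf{h}_c|^2$, so by \eqref{6} and \eqref{7} both $\gamma_{cc}$ and $\gamma_{sc}$ strictly decrease, whence $\varepsilon_{cc}$ and $\bar\varepsilon_{sc}$ strictly increase and $\delta_{sc}$ strictly decreases; for $\kappa$ close enough to $1$, continuity keeps $\varepsilon_{cc}\le\varepsilon_c$. The point to verify is that nothing else breaks: \eqref{10c} depends only on $\mathbf{v}_s$ (and the $a_i$, $m$); \eqref{10d}, \eqref{10e}, \eqref{10g} do not involve $\mathbf{v}_c$; and \eqref{10f} survives because $|\kappa\mathbf{v}_{c,n}|^2\le|\mathbf{v}_{c,n}|^2$. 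This contradicts optimality, so \eqref{10b} is tight.

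For \eqref{10e}: with \eqref{10b} tight, $\gamma_{cc}$ equals the unique $\gamma^\star$ solving $\varepsilon(\gamma^\star,m,d_c)=\varepsilon_c$, so $\delta_{sc}=(1-\varepsilon_c)(1-\bar\varepsilon_{sc})$ and minimizing $\delta_{sc}$ amounts to minimizing $\gamma_{sc}$. Solving $\gamma_{cc}=\gamma^\star$ in \eqref{6} for $|\mathbf{v}_c^H\mathbf{h}_c|^2$ and substituting into \eqref{7} gives, after a short manipulation, $\gamma_{sc}=\frac{a_s\gamma^\star}{a_c-\gamma^\star a_s}$, which is strictly decreasing in $a_c$. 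Hence, if an optimal point had $a_c+a_s<1$, I would increase $a_c$ by a small $\Delta>0$ (keeping $a_s$ and $m$, so \eqref{10d} and \eqref{10e} still hold) while rescaling $\mathbf{v}_c$ downward to the smaller effective gain that restores $\gamma_{cc}=\gamma^\star$; this is realizable by scaling and preserves \eqref{10f}. The move strictly lowers $\gamma_{sc}$ and hence $\delta_{sc}$, leaves $\gamma_{ss}$ untouched, and raises $\gamma_{cs}$, so by the earlier monotonicity relations $\varepsilon_{ss}$ only decreases and \eqref{10c} is kept — again contradicting optimality. Therefore \eqref{10e} is tight.

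The main obstacle I expect is the careful bookkeeping of feasibility under these perturbations: showing that a pure rescaling of $\mathbf{v}_c$ can realize the exact effective gain needed to keep $\gamma_{cc}=\gamma^\star$ without violating \eqref{10f}, and upgrading the ``$\partial\varepsilon/\partial\gamma\le0$'' noted earlier to strict monotonicity over the relevant SNR range so that the improvement is strict rather than merely weak. The remark that \eqref{10c} forces $a_s>0$ — which rules out the degenerate case $\bar\varepsilon_{sc}=1$ — is what guarantees that all the perturbations above are well defined.
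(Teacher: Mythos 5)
Your proof is correct and follows essentially the same route as the paper's Appendix A: a contradiction/perturbation argument that shrinks $\left| {{\bf{v}}_c^H{{\bf{h}}_c}} \right|^2$ to tighten (\ref{10b}), then raises $a_c$ while re-scaling ${\bf{v}}_c$ to keep ${\varepsilon _{cc}} = {\varepsilon _c}$ in order to tighten (\ref{10e}), checking that the remaining constraints are unaffected. Your version is somewhat more explicit (the closed form $\gamma_{sc}=a_s\gamma^\star/(a_c-\gamma^\star a_s)$ and the feasibility bookkeeping), but the underlying idea is identical.
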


\begin{proof}
    See Appendix A.
\end{proof}

With Lemma 2 in hand, the objective of P1 can be transformed into maximizing ${\bar \varepsilon _{sc}}$. The "$\le$" in (\ref{10b}) and (\ref{10e}) can be replaced with "$=$". For constraint ${\varepsilon _{cc}} = {\varepsilon _c}$, the “$=$” can be relaxed to “$\le$” and the tightness can be proved similarly to lemma 2. Then recalling that $Q(x)$ decreases monotonically with $x$, P2 can be reformulated as
\begin{subequations}
  \begin{align}
     & {\rm{P2}}: \mathop {\min }\limits_{{{\bf{v}}_i},m,{a_c}} \sqrt {\frac{m}{{V\left( {{\gamma _{sc}}} \right)}}} \Big( {C\left( {{\gamma _{sc}}} \right) - \frac{d}{m}} \Big) \label{11a} \\
     & \mbox{s.t.} \quad {a_s} = 1 - {a_c},\label{11b} \\
     & \quad \quad {a_c}>0.5, \label{11c} \\
     & \quad \quad (\text{10b}),(\text{10c}),(\text{10f}),(\text{10g}).\label{11d} 
  \end{align}
\end{subequations}

P2 is nonconvex due to the discrete integer $m$ and the coupling between ${{\bf{v}}_i}$, $m$ and ${a_c}$. Hence, P2 is a highly nonconvex mixed integer non-linear program (MINLP), which is NP-hard and computationally intractable. So far, there is no known polynomial-time algorithm to obtain the globally optimal solution for MINLP \cite{ref20}. Considering the delay-sensitive requirement, an effective approximation algorithm is introduced in the next section.

\section{Optimization Algorithm}
In this section, we decouple P3 into two subproblems and propose an effective algorithm to solve them.

\subsubsection{Optimizing ${{\bf{v}}_i}$, $m$ with fixed ${a_c}$} Due to the complexity of MINLP and the delay-sensitive demand, the constraint on the blocklength $m$ is relaxed from $m \in \mathbb{N}^+$ to continuous constraint $m\ge0$ for a quicker resolution. Recall that ${\bar \varepsilon _{sc}}$ decreases monotonically with ${\gamma _{sc}}$, ${\varepsilon _{cc}}$ decreases monotonically with ${\gamma _{cc}}$, ${\varepsilon _{ss}}$ decreases monotonically with ${\gamma _{cs}}$ and ${\gamma _{ss}}$. Besides, both ${\gamma _{ci}}$ and ${\gamma _{si}}$ decrease monotonically with $\left| {{\bf{v}}_i^H{{\bf{h}}_i}} \right|$. By introducing auxiliary variables ${\alpha _{ci}}$, ${\alpha _{si}}$, ${\beta _i}$, the subproblem can be rewritten as
\begin{subequations}
  \begin{align}
     & {\rm{P3}}: \mathop {\min }\limits_{\scriptstyle{{\bf{v}}_i},m,\hfill\atop
\scriptstyle{\alpha _{ci}},{\alpha _{si}},{\beta _i}\hfill} \sqrt {\frac{m}{{V\left( {{\alpha _{sc}}} \right)}}} \Big( {C\left( {{\alpha _{sc}}} \right) - \frac{d}{m}} \Big) \label{12a} \\
     & \mbox{s.t.} \quad \varepsilon \left( {{\alpha _{cc}},m} \right) \le {\varepsilon _c},\label{12b} \\
     & \quad \varepsilon \left( {{\alpha _{cs}},m} \right) + \left( {1 - \varepsilon \left( {{\alpha _{cs}},m} \right)} \right)\varepsilon \left( {{\alpha _{ss}},m} \right) \le {\varepsilon_s},\label{12c} \\
     & \quad\quad \frac{{\left( {1 - {a_c}} \right)P{{\left| {{\bf{v}}_c^H{{\bf{h}}_c}} \right|}^2}}}{{\sigma _c^2}} \le {\alpha _{sc}},\label{12d}   \\
     & \quad\quad \frac{{{a_c}P{\beta _i}}}{{\left( {1 - {a_c}} \right)P{\beta _i} + \sigma _i^2}} \ge {\alpha _{ci}},\label{12e}\\
     & \quad\quad \frac{{\left( {1 - {a_c}} \right)P{\beta _s}}}{{\sigma _s^2}} \ge {\alpha _{ss}}, \label{12f}\\
     & \quad\quad {\left| {{\bf{v}}_i^H{{\bf{h}}_i}} \right|^2} \ge {\beta _i}, \label{12g}\\
     & \quad\quad T \le T_{max}, m\ge0, \label{12h}\\
     & \quad\quad (\text{10f}). \label{12i}
  \end{align}
\end{subequations}
It is observed that (\ref{12a}), (\ref{12b}), (\ref{12c}) and (\ref{12g}) are non-convex. Define $\omega\!\left( {\alpha ,m} \right)\! = \!\sqrt {\frac{m}{{V\left( \alpha  \right)}}} \left( {C\left( \alpha  \right) - \frac{d}{m}} \right)\ln 2$, which is proved to be jointly convex in $\alpha $ and $m$ in our regime of interest in Lemma 3. Since ${\varepsilon _c}$ is much less than 0.5, we have $\omega_{cc}({\alpha _{cc}},m) > 0$, in which regime $Q\left( \omega_{cc}  \right)$ is convex and monotonically decreasing. Therefore, $\varepsilon({\alpha _{cc}},m)$ is jointly convex in $\alpha_{cc} $ and $m$. Besides, (\ref{12c}) is also convex in the considered regime according to Lemma 3.
\begin{lemma}
    \label{lemma3}
    $\omega \left( {\alpha ,m} \right) = \sqrt {\frac{m}{{V\left( \alpha  \right)}}} \left( {C\left( \alpha  \right) - \frac{d}{m}} \right)\ln 2$ is jointly convex in $\alpha$ and $m$ if the following condition holds:
    \begin{equation}
    \label{13}
    \begin{split}
    r \ge \frac{{ - {\Delta _b} + \sqrt {\Delta _b^2 - 4{\Delta _a}{\Delta _c}} }}{{2{\Delta _a}}},
    \end{split}
\end{equation}
where ${\Delta _a} = \frac{{8 + 9t}}{{4{t^2}}}$, ${\Delta _b} = \frac{{t\left( {6t + 8} \right) - \left( {3t + 8} \right)C\ln 2}}{{4{t^2}\ln 2}}$, ${\Delta _c} = \frac{{tC\ln 2\left( {4 - 3\ln 2} \right) + {t^2}\left( {C\ln 2 - 1} \right) - 4{C^2}{{\left( {\ln 2} \right)}^2}}}{{4{t^2}{{\left( {\ln 2} \right)}^2}}}$ with $t = {\alpha ^2} + 2\alpha$. 
\end{lemma}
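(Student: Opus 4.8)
The plan is to prove joint convexity of $\omega(\alpha,m)$ on the relevant domain by computing its Hessian and showing it is positive semidefinite. First I would change variables to simplify the analysis: let $r = d/m$ (so $m = d/r$) — the quantity $r$ is the "rate" and the condition (\ref{13}) is stated as a lower bound on $r$, so this substitution is natural. Writing $t = \alpha^2 + 2\alpha$ so that $V(\alpha) = 1 - (1+\alpha)^{-2} = t/(1+t)\cdot$(something) — more precisely $V(\alpha) = \frac{\alpha^2+2\alpha}{(1+\alpha)^2} = \frac{t}{t+1}$ — I would express $\omega = \sqrt{m/V(\alpha)}\,(C(\alpha) - r)\ln 2$ and note that $C(\alpha)\ln 2 = \ln(1+\alpha)$, $V(\alpha) = 1 - e^{-2C(\alpha)\ln 2}$. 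Since convexity is preserved under the affine reparametrization $(\alpha,m)\mapsto(\alpha,r)$ only if we are careful — actually $m\mapsto d/r$ is not affine — I would instead keep $m$ but use $r$ as bookkeeping, or verify directly that joint convexity in $(\alpha,m)$ is what the $2\times 2$ Hessian test gives.

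The key steps, in order: (i) compute the first and second partial derivatives $\omega_{mm}$, $\omega_{\alpha\alpha}$, $\omega_{\alpha m}$; the $m$-dependence is simple since $\omega = A(\alpha)\sqrt{m} - B(\alpha)/\sqrt{m}$ for $A(\alpha) = C(\alpha)\ln 2/\sqrt{V(\alpha)}$ and $B(\alpha) = d\ln 2/\sqrt{V(\alpha)}$, so $\omega_{mm} = -\tfrac14 A m^{-3/2} - \tfrac34 B m^{-5/2} < 0$... wait, that's negative, so I would need to double-check signs — in fact $\partial^2(\sqrt m)/\partial m^2 = -\tfrac14 m^{-3/2}<0$ and $\partial^2(m^{-1/2})/\partial m^2 = \tfrac34 m^{-5/2}>0$, so $\omega_{mm} = -\tfrac14 A m^{-3/2} + \tfrac34 B m^{-5/2}$, which is positive iff $3B/m > A$, i.e. $3d/m > C(\alpha)$, i.e. $3r > C\ln 2$... this needs the regime assumption, consistent with a lower bound on $r$. (ii) Form the Hessian determinant $D = \omega_{\alpha\alpha}\omega_{mm} - \omega_{\alpha m}^2$ and, after clearing positive factors, reduce the condition $D \ge 0$ together with $\omega_{mm}\ge 0$ to a single polynomial inequality in $r$ (with coefficients depending on $\alpha$ through $t$ and $C$). (iii) Recognize that inequality as a quadratic $\Delta_a r^2 + \Delta_b r + \Delta_c \ge 0$ with $\Delta_a > 0$, whose larger root is the right-hand side of (\ref{13}); since the parabola opens upward, the inequality holds precisely when $r$ is at least that root (the smaller root being irrelevant or negative in our regime), which is exactly condition (\ref{13}). (iv) Check that $\omega_{mm}\ge 0$ is implied by the same condition (or is a weaker requirement), so that PSD-ness of the Hessian follows from $D\ge 0$ and $\omega_{mm}\ge 0$.

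The main obstacle will be step (ii): the algebra of assembling $\omega_{\alpha\alpha}$ is genuinely messy because $V(\alpha)$ appears under a square root and $C(\alpha)$ is logarithmic, so the second $\alpha$-derivative produces several terms with denominators $V^{3/2}$, $V^{5/2}$, and factors $(1+\alpha)^{-k}$; collecting everything over a common denominator and simplifying using $t = \alpha^2+2\alpha$, $(1+\alpha)^2 = 1+t$, $V = t/(1+t)$ to get the clean quadratic in $r$ with the stated $\Delta_a,\Delta_b,\Delta_c$ is the crux. I would organize this by first computing $\frac{d}{d\alpha}\big(1/\sqrt{V(\alpha)}\big)$ and $\frac{d}{d\alpha}C(\alpha)$ in terms of $t$, then differentiating $A(\alpha)$ and $B(\alpha)$, and finally substituting into $D$. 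A useful sanity check along the way: $\Delta_c$ should be arranged so that at the threshold the discriminant $\Delta_b^2 - 4\Delta_a\Delta_c$ is nonnegative (otherwise the root is not real), and one should confirm $\Delta_a = \frac{8+9t}{4t^2} > 0$ matches the coefficient of $r^2$ that emerges. Everything else — monotonicity of $Q$, the reduction from $\varepsilon$ to $\omega$, the positivity of the leading factors $\sqrt{m}/V^{3/2}$ etc. — is routine once the Hessian computation is in place.
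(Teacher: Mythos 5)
Your overall plan --- compute the $2\times2$ Hessian of $\omega$ and reduce the semidefiniteness test to a quadratic inequality in the rate $r=d/m$ --- is exactly the route taken in the source the paper cites for this lemma (Appendix~C of [13]); the paper itself offers no proof beyond that citation, and your identification of $r$ as $d/m$ and of $\Delta_a r^2+\Delta_b r+\Delta_c\ge 0$ as the resulting quadratic is right. However, there is a concrete sign error in your step (i) that derails the argument. Writing $\omega=A(\alpha)\sqrt m - B(\alpha)m^{-1/2}$ with $A=C\ln 2/\sqrt V>0$ and $B=d\ln 2/\sqrt V>0$, the coefficient of $m^{-1/2}$ is $-B$, so
\begin{equation*}
\frac{\partial^2\omega}{\partial m^2}=-\frac{A}{4}m^{-3/2}-\frac{3B}{4}m^{-5/2}<0
\end{equation*}
for every $m>0$: the term you wrote as $+\tfrac34 Bm^{-5/2}$ should be $-\tfrac34 Bm^{-5/2}$. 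Consequently $\omega$ is strictly concave in $m$ unconditionally, its Hessian can never be positive semidefinite, and the ``joint convexity'' you set out to prove is false as literally stated; the condition $3r>C\ln 2$ you extracted does not exist.

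What is actually true (and what the paper needs) is joint \emph{concavity} of $\omega$ under (\ref{13}); the word ``convex'' in the lemma statement is a typo. You can see this from how the lemma is used: the paper concludes that $\varepsilon=Q(\omega)$ is jointly convex, and since $Q$ is convex and decreasing on positive arguments, the composition rule requires the inner function $\omega$ to be concave, not convex; Appendix~B likewise says ``recall that $\omega_{cs}(\alpha_{cs},m)$ and $\omega_{ss}(\alpha_{ss},m)$ are concave in this regime.'' So your program should be flipped: show the Hessian is negative semidefinite, i.e.\ $\omega_{mm}\le 0$ (automatic, by the computation above), $\omega_{\alpha\alpha}\le 0$, and $\det\mathbf H=\omega_{\alpha\alpha}\omega_{mm}-\omega_{\alpha m}^2\ge 0$, with condition (\ref{13}) emerging as the requirement that $r$ be at least the larger root of the quadratic with the stated $\Delta_a,\Delta_b,\Delta_c$. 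Your organization of the $\alpha$-derivatives in step (ii) via $t=\alpha^2+2\alpha$ and $V=t/(1+t)$ carries over unchanged; only the target sign of the diagonal entries changes.
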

\begin{proof}
    See appendix C in \cite{ref13}.
\end{proof}
\begin{lemma}
    \label{lemma4}
    $\varepsilon \left( {{\alpha _{cs}},m} \right) + \left( {1 - \varepsilon \left( {{\alpha _{cs}},m} \right)} \right)\varepsilon \left( {{\alpha _{ss}},m} \right)$ is jointly convex in ${\alpha _{cs}}$, ${\alpha _{ss}}$ and $m$ if (\ref{13}) in lemma 3 holds.
\end{lemma}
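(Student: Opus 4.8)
The plan is to strip off the coupling term and reduce the claim to second-order facts about the single-index error probability $\varepsilon(\alpha,m)$ that Lemma~3 already supplies. First I would rewrite the left-hand side of (\ref{12c}) as
\[
 f(\alpha_{cs},\alpha_{ss},m)\;=\;\varepsilon(\alpha_{cs},m)+\bigl(1-\varepsilon(\alpha_{cs},m)\bigr)\varepsilon(\alpha_{ss},m)\;=\;1-u\,w ,
\]
with $u:=1-\varepsilon(\alpha_{cs},m)$ and $w:=1-\varepsilon(\alpha_{ss},m)$, so that joint convexity of $f$ is equivalent to joint concavity of the product $g:=uw$. On the feasible set the constraint $f\le\varepsilon_s$ with $\varepsilon_s\ll 1/2$ forces $\varepsilon(\alpha_{cs},m)\le\varepsilon_s$ and $\varepsilon(\alpha_{ss},m)\le\varepsilon_s$, hence $\omega(\alpha_{cs},m)>0$ and $\omega(\alpha_{ss},m)>0$, which is precisely the regime where $Q(\cdot)$ is convex and decreasing. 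Thus, arguing exactly as in the text for $\varepsilon(\alpha_{cc},m)$ and using the joint convexity of $\omega$ from Lemma~3 (valid under (\ref{13})), both $\varepsilon(\alpha_{cs},m)$ and $\varepsilon(\alpha_{ss},m)$ are jointly convex there; consequently $u,w$ are jointly concave and satisfy $1-\varepsilon_s\le u,w\le 1$.

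Next I would differentiate the product. Writing $\nabla$ for the gradient in $(\alpha_{cs},\alpha_{ss},m)$,
\[
 \nabla^2 g \;=\; w\,\nabla^2 u + u\,\nabla^2 w + \nabla u\,(\nabla w)^{\top} + \nabla w\,(\nabla u)^{\top} .
\]
The first two terms are negative semidefinite because $u,w>0$ and $\nabla^2 u,\nabla^2 w\preceq 0$, so the only obstruction is the rank-two cross term. For every direction $x$ we have $x^{\top}\bigl(\nabla u(\nabla w)^{\top}+\nabla w(\nabla u)^{\top}\bigr)x=2(\nabla u\cdot x)(\nabla w\cdot x)\le(\nabla u\cdot x)^{2}+(\nabla w\cdot x)^{2}$, so it suffices to prove the two matrix inequalities $w\,\nabla^2 u+\nabla u(\nabla u)^{\top}\preceq 0$ and $u\,\nabla^2 w+\nabla w(\nabla w)^{\top}\preceq 0$; moreover, since $u$ depends only on $(\alpha_{cs},m)$ and $w$ only on $(\alpha_{ss},m)$, each of these reduces to a $2\times2$ inequality. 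Substituting $u=1-\varepsilon$, $\nabla u=-\nabla\varepsilon$, $\nabla^2 u=-\nabla^2\varepsilon$, each becomes $\nabla\varepsilon\,(\nabla\varepsilon)^{\top}\preceq w\,\nabla^2\varepsilon$, i.e. the squared-Newton-decrement quantity $(\nabla\varepsilon)^{\top}(\nabla^2\varepsilon)^{-1}(\nabla\varepsilon)$ of the single-index error probability is at most $w\le 1$.

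This last bound is where the real work lies. Using $\varepsilon=Q(\omega)$ one has $\nabla\varepsilon=-\phi(\omega)\nabla\omega$ and $\nabla^2\varepsilon=\omega\,\phi(\omega)\,\nabla\omega(\nabla\omega)^{\top}-\phi(\omega)\,\nabla^2\omega$, where $\phi$ is the standard normal density; cancelling $\phi(\omega)>0$, the inequality takes the form $\nabla^2\omega\preceq\bigl(\omega-\phi(\omega)/w\bigr)\nabla\omega(\nabla\omega)^{\top}$ on the relevant directions — a one-sided second-order estimate for $\omega$ of the very same flavour as the one established in the proof of Lemma~3, and closed by the discriminant condition (\ref{13}) (which controls the curvature of $\omega$) together with the fact that $\phi(\omega)$, and hence $\nabla\varepsilon$, is exponentially small in the reliability-floor regime $\varepsilon\le\varepsilon_s$ while $\nabla^2\varepsilon$ stays bounded away from zero. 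Equivalently one may skip the elementary $2ab\le a^{2}+b^{2}$ split and apply Sylvester's criterion directly to the $3\times3$ Hessian $\nabla^2 g$, feeding in the bounds on the partial derivatives of $\omega$ furnished by Lemma~3; either route leads to the same kind of lengthy but routine estimate, and that estimate is the only genuine obstacle.
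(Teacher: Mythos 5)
Your reduction is sound as far as it goes: writing the constraint as $1-uw$ with $u=1-\varepsilon(\alpha_{cs},m)$, $w=1-\varepsilon(\alpha_{ss},m)$, splitting the cross term of $\nabla^{2}(uw)$ via $2ab\le a^{2}+b^{2}$, and arriving at the sufficient condition $\nabla\varepsilon(\nabla\varepsilon)^{\top}\preceq w\,\nabla^{2}\varepsilon$ for each factor are all correct steps. The gap is that you stop at exactly the inequality that carries the content of the lemma, declaring it a ``lengthy but routine estimate,'' and the justification you sketch for it is partly wrong: you argue that $\nabla\varepsilon$ is exponentially small ``while $\nabla^{2}\varepsilon$ stays bounded away from zero,'' but $\nabla^{2}\varepsilon=\phi(\omega)\bigl(\omega\,\nabla\omega(\nabla\omega)^{\top}-\nabla^{2}\omega\bigr)$ carries the same factor $\phi(\omega)$ and is just as exponentially small, so no smallness argument of that shape can close the step. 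It does close, but only by finishing the computation you set up: after cancelling $\phi(\omega)>0$ the condition becomes $w\,\nabla^{2}\omega\preceq\bigl(w\omega-\phi(\omega)\bigr)\nabla\omega(\nabla\omega)^{\top}$, and this holds because condition (\ref{13}) makes $\omega$ concave in the regime of interest (so $\nabla^{2}\omega\preceq 0$; note that Lemma 3 must be read as giving concavity for the paper's own composition arguments to be valid), while on the feasible set $w\omega\ge(1-\varepsilon_s)Q^{-1}(\varepsilon_s)>\phi\bigl(Q^{-1}(\varepsilon_s)\bigr)\ge\phi(\omega)$, so the left side is negative semidefinite and the right side positive semidefinite. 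Without that line (or an equivalent one) the proposal is a reduction, not a proof. You should also justify, rather than assume, that $u$ and $w$ are positive and jointly concave; this does follow from the joint convexity of $\varepsilon(\alpha,m)$ established in the text, but it is a load-bearing fact in your Hessian decomposition.

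For comparison, the paper takes a different and shorter route that avoids the product decomposition entirely: it computes the $2\times2$ Hessian of $\varepsilon_{ss}(\omega_{cs},\omega_{ss})=Q(\omega_{cs})+(1-Q(\omega_{cs}))Q(\omega_{ss})$ directly in the $\omega$ variables, shows its determinant is nonnegative once $\omega_{cs},\omega_{ss}>Q^{-1}(0.3)$, and then applies the vector composition rule (convex, componentwise decreasing outer function composed with the concave maps $\omega(\alpha_{cs},m)$ and $\omega(\alpha_{ss},m)$). The rank-two cross term you fight with is there absorbed into a single determinant bounded by an explicit numerical estimate. Your approach, once the missing inequality above is actually verified, is a legitimate alternative and makes more explicit where the regime condition enters, but it is longer and, as submitted, incomplete at its decisive step.
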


\begin{proof}
    See appendix B.
\end{proof}

As for (\ref{12g}), MM is applied to deal with it in an iterative manner. Specifically, the first-order Taylor expansion of $f\left( {{{\bf{v}}_i}} \right) = {\left| {{\bf{v}}_i^H{{\bf{h}}_i}} \right|^2}$ around the obtained optimal point ${\bf{v}}_i^{\left( q \right)}$ in the $q$th iteration can be derived as
\begin{equation}
    \label{14}
    \begin{split}
    & f\left( {{{\bf{v}}_i}} \right) = {\left| {{\bf{v}}_i^H{{\bf{h}}_i}} \right|^2} \ge \tilde f\Big( {{{\bf{v}}_i},{\bf{v}}_i^{\left( q \right)}} \Big) \\
    & = 2{\mathop{\rm Re}\nolimits} \Big( {{\bf{v}}{{_i^{\left( q \right)}}^H}{{\bf{h}}_i}{\bf{h}}_i^H{{\bf{v}}_i}} \Big) - {\mathop{\rm Re}\nolimits} \Big( {{\bf{v}}{{_i^{\left( q \right)}}^H}{{\bf{h}}_i}{\bf{h}}_i^H{\bf{v}}_i^{\left( q \right)}} \Big),  
    \end{split}
\end{equation}
which is a lower bound of $f\left( {{{\bf{v}}_i}} \right)$. Similarly, for the objective function $g\left( {{\alpha _{sc}},m} \right) = \sqrt {\frac{m}{{V\left( {{\alpha _{sc}}} \right)}}} \left( {C\left( {{\alpha _{sc}}} \right) - \frac{d}{m}} \right)$, the first-order Taylor expansion around the obtained point $\left( {\alpha _{sc}^{\left( q \right)},{m^{\left( q \right)}}} \right)$ in the $q$th iteration can be derived as
    \begin{align}
&g\left( {{\alpha _{sc}},m} \right) = \sqrt {\frac{m}{{V\left( {{\alpha _{sc}}} \right)}}} \left( {C\left( {{\alpha _{sc}}} \right) - \frac{d}{m}} \right) \nonumber\\
&\le \tilde g\left( {{\alpha _{sc}},m,\alpha _{sc}^{\left( q \right)},{m^{\left( q \right)}}} \right)\nonumber\\
&= \sqrt {\frac{{{m^{\left( q \right)}}}}{{V\big( {\alpha _{sc}^{\left( q \right)}} \big)}}} \left( {C\left( {\alpha _{sc}^{\left( q \right)}} \right) - \frac{d}{{{m^{\left( q \right)}}}}} \right) + \frac{1}{2}\left( {m - {m^{\left( q \right)}}} \right)\nonumber\\
 &{\left( {{m^{\left( q \right)}}V\left( {\alpha _{sc}^{\left( q \right)}} \right)} \right)^{ - \frac{1}{2}}} \left( {C\left( {\alpha _{sc}^{\left( q \right)}} \right) + \frac{d}{{{m^{\left( q \right)}}}}} \right) +\nonumber\\
 &\left( {{\alpha _{sc}} - \alpha _{sc}^{\left( q \right)}} \right)\sqrt {\frac{{{m^{\left( q \right)}}}}{{V\big( {\alpha _{sc}^{\left( q \right)}} \big)}}} \frac{1}{{1 + \alpha _{sc}^{\left( q \right)}}}\nonumber\\
 &\Bigg( {\frac{1}{{\alpha {{_{sc}^{\left( q \right)}}^2} + 2\alpha _{sc}^{\left( q \right)}}}\left( {\frac{d}{{{m^{\left( q \right)}}}} - C\left( {\alpha _{sc}^{\left( q \right)}} \right)} \right) + \frac{1}{{\ln 2}}} \Bigg),
\end{align}
which is an upper bound of $g\left( {{\alpha _{sc}},m} \right)$. Then the optimization problem in the $(q+1)$th iteration can be reformulated as
\begin{subequations}
  \begin{align}
     & {\rm{P4}}: \mathop {\min }\limits_{\scriptstyle{{\bf{v}}_i},m,{\alpha _{ci}}\hfill\atop
\scriptstyle,{\alpha _{si}},{\beta _i}\hfill} \tilde g\left( {{\alpha _{sc}},m,\alpha _{sc}^{\left( q \right)},{m^{\left( q \right)}}} \right) \label{16a} \\
     & \mbox{s.t.} \quad \tilde f\left( {{{\bf{v}}_i},{\bf{v}}_i^{\left( q \right)}} \right) \ge {\beta _i},\label{16b} \\
     & \quad \quad (\text{12b})-(\text{12f}), (\text{12h}), (\text{12i}). \label{16c}
  \end{align}
\end{subequations}
At this point, we have transformed P3 into a sequence of convex problems. In what follows, we will focus on the subproblem of ${a_c}$.

\subsubsection{Optimizing ${a_c}$ with fixed ${{\bf{v}}_i}$ and $m$} By substituting the results obtained in \textit{1)} into P2 and discarding the terms independent of ${a_c}$, the subproblem of ${a_c}$ can be written as
\begin{subequations}
  \begin{align}
     & {\rm{P5}}: \mathop {\min }\limits_{{a_c}} \sqrt {\frac{m}{{V\left( {{\gamma _{sc}}} \right)}}} \left( {C\left( {{\gamma _{sc}}} \right) - \frac{d}{m}} \right) \label{17a} \\
     & \mbox{s.t.} \quad (\text{10b}), (\text{10c}), (\text{11b}), (\text{11c}) \label{17b}.
  \end{align}
\end{subequations}
The objective function (\ref{17a}) and ${\varepsilon _{cc}}$ in (\ref{10b}) decrease monotonically with ${a_c}$. Therefore, P5 is equivalent to finding the maximal ${a_c^{(p+1)}}$ within $\left[{a_c^{(p)}},1\right)$ under the constraint of (\ref{10c}), where $p$ denotes the number of iterations of AO. Since the optimization variable is a scalar, the optimal solution can be found through one-dimension search. Accordingly, the overall algorithm is summarized in Algorithm 1.
\begin{figure}[!t]
  \begin{algorithm}[H]
    \caption{Security optimization based on AO \& MM }
    \label{alg1}
    \begin{algorithmic}[1]
      \STATE \textbf{Input} $P, {\sigma _i}, \rho, {\alpha _1}, {\alpha _2}, K, {c}, {d}, {\varepsilon _i}, {m_0}, {\zeta _1}, {\zeta _2}, {N_v}, {N_h}$,\\
      and $Iter_{\max }$;
      \STATE \textbf{Initialize} $a_c^{\left( 0 \right)}, {\bf{v}}_i^{\left( 0 \right)}, {m^{(0)}}, \varepsilon _{sc}^{(0)}, {\bf{\tilde v}}_i^{\left( 0 \right)}, {\tilde m^{(0)}}$ and $\tilde \varepsilon _{sc}^{(0)}$;
      \FOR {$p=1:Iter_{\max }$}
      \FOR{$q=1:Iter_{\max}$}
      \STATE Obtain ${\bf{\tilde v}}_i^{\left( q \right)}$, ${\tilde m^{(q)}}$ and $\tilde \varepsilon _{sc}^{(q)}$ for given $a_c^{\left( {p - 1} \right)}$, ${\bf{\tilde v}}_i^{\left( {q - 1} \right)}$ and ${\tilde m^{(q - 1)}}$;
      \IF {$\left| {\tilde \varepsilon _{sc}^{(q)} - \tilde \varepsilon _{sc}^{(q - 1)}} \right| \le {\zeta _1}$}
      \STATE Set ${\bf{v}}_i^{\left( p \right)} = {\bf{\tilde v}}_i^{\left( q \right)},{m^{(p)}} = {\tilde m^{(q)}}$;
      \STATE Break;
      \ENDIF
      \ENDFOR
      \STATE Obtain $a_c^{\left( p \right)}$ and $\varepsilon _{sc}^{(p)}$ for given ${\bf{v}}_i^{\left( p \right)}$ and ${m^{(p)}}$;
      \IF{$\left| {\varepsilon _{sc}^{(p)} - \varepsilon _{sc}^{(p - 1)}} \right| < {\zeta _2}$}
      \STATE Set $a_c^* = a_c^{\left( p \right)}$, ${\bf{v}}_i^* = {\bf{v}}_i^{\left( p \right)}$, $m* = {m^{(p)}}$, $\varepsilon _{sc}^* = \varepsilon _{sc}^{(p)}$;
      \STATE Break;
      \ENDIF
      \ENDFOR
      \STATE Set $m^{*}=\lceil m^{*}\rceil$;
      \STATE \textbf{Output} $a_c^*,{\bf{v}}_i^*,m*,\varepsilon _{sc}^*$.
    \end{algorithmic}
  \end{algorithm}
  \vspace{-0.8cm}
\end{figure}

\subsubsection{Complexity Analysis}
The subproblem of ${{\bf{v}}_i}$ and $m$ is implemented by the interior-point method with a complexity of $\mathcal{O}(k_1k_2(4N+4)^{3.5}\mathrm{log}(1/\epsilon))$, where $k_1$ and $k_2$ are the number of iterations for AO and MM, respectively. Moreover, $4N+4$ is the number of optimization variables and $\epsilon$ the accuracy of the interior-point method. Besides, the complexity of P5 is $\mathcal{O}(k_1\mathrm{log}(\psi/\psi_{th}))$, where where $\psi$ and $\psi_{th}$ denote the step-length factor and step-length threshold.

\section{Illustrative Results}
In this section, simulations are carried out to evaluate the performance of the proposed scheme. Unless otherwise stated, the parameters are set as: ${{\bf{c}}_S}={\left[ {0,0,10} \right]^T}$m, ${{\bf{c}}_R}={\left[ {35,20,10} \right]^T}$m, ${{\bf{c}}_{{U_c}}}={\left[ {40,0,0} \right]^T}$m, ${\bf{c}}_{{U_s}}={\left[ {40,40,0} \right]^T}$m, $\rho=-30$ dB, ${\alpha _1}={\alpha _2}=2.5$, $K=3$ dBm, $B=1.4$ MHz, $N=16$, $\sigma _c^2=\sigma _s^2=-80$ dBm, ${P_{\max }}=30$ dBm, ${\varepsilon _c}, {\varepsilon _s}={10^{ - 3}}$, ${T_{\max }}=0.715$ ms, ${m_{max}}=1000$ channel uses, ${d_c}={d_s}=100$, ${\zeta _1}={10^{ - 4}}$, ${\zeta _2}={10^{ - 10}}$, $Ite{r_{\max }}=30$.

\begin{figure}[t]
	\centering  
 	\subfigbottomskip=-1.5pt 
	\subfigcapskip=-1.5pt 
	\subfigure[Optimized ${\delta _{sc}}$]{
		\includegraphics[width=0.48\linewidth]{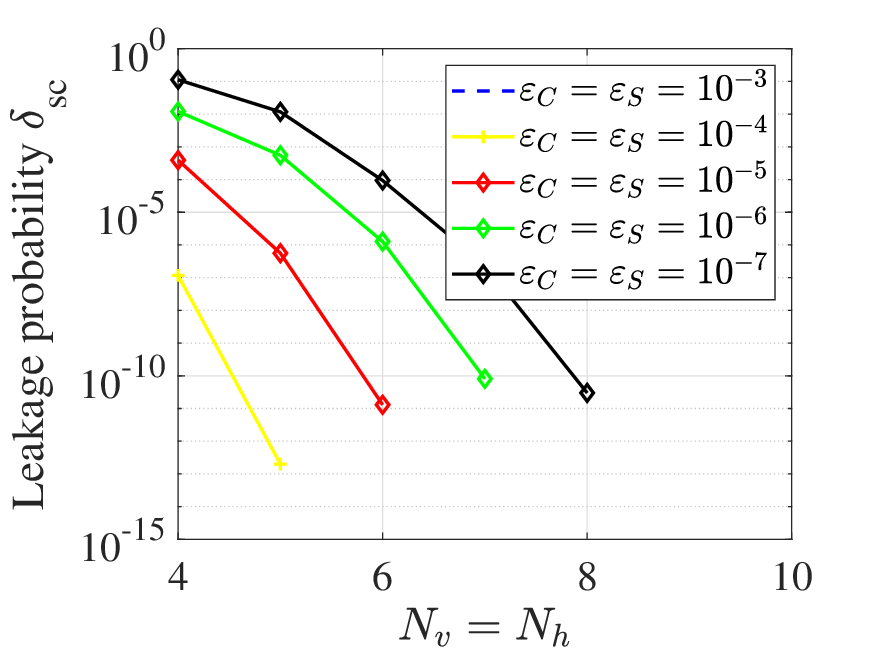}}
	\subfigure[Optimized $T$]{
		\includegraphics[width=0.48\linewidth]{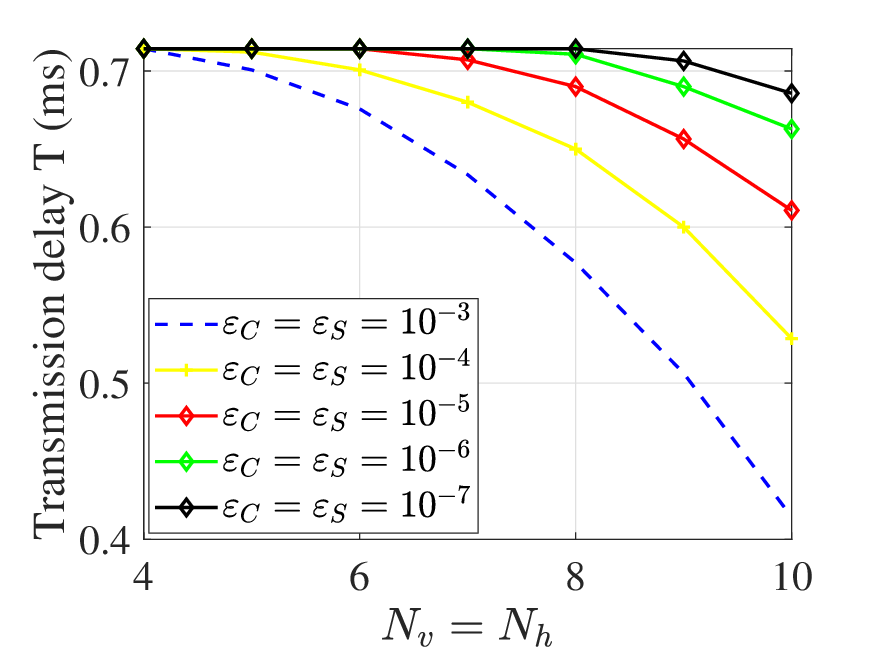}}
	\caption{Optimized ${\delta _{sc}}$ and $T$ versus the number of STAR-RIS elements}
\end{figure}

Fig. 3 illustrates the optimized ${\delta _{sc}}$ and $T$ by the proposed scheme versus the number of STAR-RIS elements under different reliability constraints. Compared with the non-optimized case in Fig. 2, a low level of $\delta_{sc}$ can be achieved while satisfying the strict reliability constraints of both users. At the same time, even though the upper limit of transmission delay has been set to $T_{max}=0.715$ ms (i.e., $m_{max}=1000$), the optimized result can achieve an even lower level. The reason is that the restrictive trade-off between ${\varepsilon _{cc}}$, ${\varepsilon _{ss}}$, ${\delta _{sc}}$ and $T$ is improved by the additional freedom degrees brought by STAR-RIS. Specifically, by constructively/destructively reconfiguring the channel between S and $U_s$/$U_c$, the required $a_s$ and $m$ to achieve the reliability requirement of $U_s$ can be reduced, which also enables a lower leakage probability. At the same time, to meet the reliability requirement of $U_c$, more power can be allocated to $s_c$ to compensate for the destructed channel between S and $U_c$ and the reduced blocklength. In other words, through joint optimization, the resources including beamforming, allocated power and blocklength available to ${U_c}$ for decoding $s_s$ can be reduced for a lower ${\delta _{sc}}$ without violating the constraints of ${\varepsilon _{cc}}$ and ${\varepsilon _{ss}}$. It is also seen that the increase in STAR-RIS elements number brings lower ${\delta _{sc}}$ and $T$ due to a higher degree of freedom. Besides, ${\delta _{sc}}$ and $T$ tend to increase as the preset thresholds ${\varepsilon _c}$ and ${\varepsilon _s}$ decrease due to the restrictive interplay between ${\varepsilon _{cc}}$, ${\varepsilon _{ss}}$ ${\delta _{sc}}$ and $T$.

\begin{figure}[!t]
  \centering
  \includegraphics[width=0.48\linewidth]{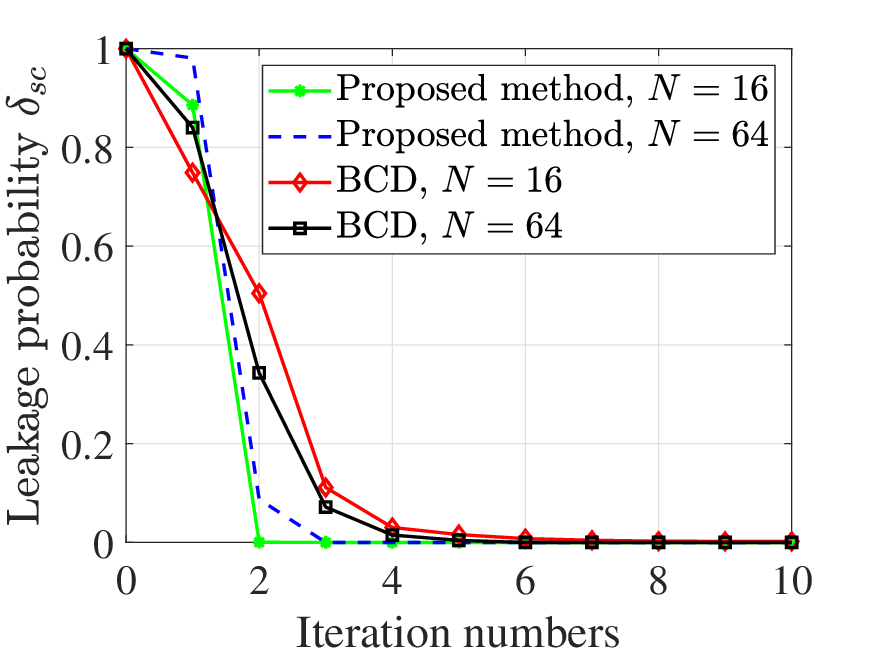}
  \caption{Convergence performance of the proposed algorithm}
  \label{fig4}
\end{figure}

Fig. 4 demonstrates the convergence performance of the proposed algorithm with different numbers of STAR-RIS elements. The block coordinate descent (BCD)-based method is conducted as the benchmark \cite{ref21}, where three subproblems of beamforming, power allocation and blocklength are iteratively solved via successive convex approximation and exhaustive search. It is illustrated that the proposed scheme shows a faster convergence speed  than the benchmark due to the exploration of the joint convexity of the optimization variables, which indicates a higher adaptability for latency-sensitive scenarios.

\section{Conclusion}
In this article, a STAR-RIS-aided untrusted NOMA-SPC system was considered to enhance the security performance while maintaining the required xURLLC performances. The reliability and security performance were characterized via probabilistic metrics for a better demonstration of the trade-off among reliability, security and latency. Through the assistance of STAR-RIS and joint design of passive beamforming, power allocation and blocklength, the trade-off can be greatly improved and the security can be significantly improved while fulfilling the customized reliability and latency requirements.

\begin{appendices}
    \section{Proof of lemma 1}
        Assume “$<$” in (\ref{10b}) holds at the optimal solution. Since $\frac{{\partial {\varepsilon _{cc}}}}{{\partial {{\left| {{\bf{v}}_c^H{{\bf{h}}_c}} \right|}^2}}} \le 0$ and $\frac{{\partial {\delta _{sc}}}}{{\partial {{\left| {{\bf{v}}_c^H{{\bf{h}}_c}} \right|}^2}}} \ge 0$, lower ${\delta _{sc}}$ can always be found through ${\left| {{\bf{v}}_c^H{{\bf{h}}_c}} \right|^2} < {\left| {{\bf{v}}_c^{*H}{{\bf{h}}_c}} \right|^2}$ without degrading ${\varepsilon _{ss}}$. Therefore, ${\varepsilon _{cc}} = {\varepsilon _c}$ holds at optimal solution, and (\ref{10a}) can be reduced to ${\bar \varepsilon _{sc}}$. Similarly, assume the optimal solution satisfies $a_c^* + a_s^* < 1$. Since $\frac{{\partial {{\bar \varepsilon }_{sc}}}}{{\partial {{\left| {{\bf{v}}_c^H{{\bf{h}}_c}} \right|}^2}}} \le 0$, lower ${\bar \varepsilon _{sc}}$ can always be achieved through a higher ${a_c}$ and a lower ${\left| {{\bf{v}}_c^H{{\bf{h}}_c}} \right|^2}$ satisfying ${\varepsilon _{cc}} = {\varepsilon _c}$. Meanwhile, ${\varepsilon _{ss}}$ can be reduced due to lower ${\varepsilon _{cs}}$. Therefore, $a_c^* + a_s^* = 1$ holds.

    \section{Proof of lemma 3}
        The Hessian matrix of ${\varepsilon _{ss}}\left( {{\omega _{cs}},{\omega _{ss}}} \right) = Q\left( {{\omega _{cs}}} \right) + \left( {1 - Q\left( {{\omega _{cs}}} \right)} \right)Q\left( {{\omega _{ss}}} \right)$ is represented as follows:
        \begin{equation}
             \label{18}
             \begin{split}
             {\bf{H}} = \left( {\begin{array}{*{20}{c}}
{\frac{{{\partial ^2}{\varepsilon _{ss}}}}{{\partial {\omega _{cs}}^2}}}&{\frac{{{\partial ^2}{\varepsilon _{ss}}}}{{\partial {\omega _{cs}}\partial {\omega _{ss}}}}}\\
{\frac{{{\partial ^2}{\varepsilon _{ss}}}}{{\partial {\omega _{ss}}\partial {\omega _{cs}}}}}&{\frac{{{\partial ^2}{\varepsilon _{ss}}}}{{\partial {\omega _{ss}}^2}}}
\end{array}} \right).
  \end{split}
\end{equation}
The determinant of the matrix can be derived as:
  \begin{align}
  \label{19}
    & \det \left( {\bf{H}} \right) = \frac{1}{2\pi}{e^{ - \frac{{\omega _{cs}^2}}{2} - \frac{{\omega _{ss}^2}}{2}}}\big( {1 - Q\left( {{\omega _{cs}}} \right)} \big)\big( {1 - Q\left( {{\omega _{ss}}} \right)} \big) \nonumber \\
& \qquad \qquad \qquad \qquad \qquad \quad {{\omega _{cs}}{\omega _{ss}}} - \frac{1}{{4{\pi ^2}}}{e^{ - \omega _{cs}^2 - \omega _{ss}^2}} \nonumber \\
 & = \frac{1}{{2\pi }}{e^{ - \frac{{\omega _{cs}^2}}{2} - \frac{{\omega _{ss}^2}}{2}}}\Big( {\left( {1 - {\varepsilon _{ss}}} \right){\omega _{cs}}{\omega _{ss}} - \frac{1}{{2\pi }}{e^{ - \frac{{\omega _{cs}^2}}{2} - \frac{{\omega _{ss}^2}}{2}}}} \Big) \nonumber \\
& \mathop  \ge \limits^a \frac{1}{{2\pi }}{e^{ - \frac{{\omega _{cs}^2}}{2} - \frac{{\omega _{ss}^2}}{2}}}\Big( \left( {1 - 0.3} \right){{\big(Q^{-1}(0.3)\big)}^2} -  \nonumber \\
& \qquad \qquad \qquad \qquad \qquad \quad \frac{1}{{2\pi }}{e^{ - {{\big(Q^{-1}(0.3)\big)}^2}}} \Big) \ge 0,  
  \end{align}
  where “a” holds since ${\varepsilon _{ss}} \!\ll\! 0.3$, $Q\left( {{\omega _{cs}}} \right), Q\left( {{\omega _{ss}}} \right) \!<\! {\varepsilon _{ss}}$, i.e., ${\omega _{cs}}, {\omega _{ss}} > Q^{-1}(0.3)$, and $h({\varepsilon _{ss}}, \omega _{cs}, \omega _{ss})=\left( {1 - {\varepsilon _{ss}}} \right){\omega _{cs}}{\omega _{ss}}- \frac{1}{{2\pi }}{e^{ - \frac{{\omega _{cs}^2}}{2} - \frac{{\omega _{ss}^2}}{2}}}$ decreases monotonically with ${\varepsilon _{ss}}$ while increases with $\omega _{cs}$ and $\omega _{ss}$. Thus ${\varepsilon _{ss}}\left( {{\omega _{cs}},{\omega _{ss}}} \right)$ is jointly convex in ${\omega _{cs}}$ and ${\omega _{ss}}$ in our interested regime. Moreover, recall that ${\omega _{cs}}\left( {{\alpha _{cs}},m} \right)$ and ${\omega _{ss}}\left( {{\alpha _{ss}},m} \right)$ are concave in this regime, and ${\varepsilon _{ss}}\left( {{\omega _{cs}},{\omega _{ss}}} \right)$ decreases monotonically in ${\omega _{cs}}$ and ${\omega _{ss}}$. Therefore, $\varepsilon \left( {{\alpha _{cs}},m} \right) + \left( {1 - \varepsilon \left( {{\alpha _{cs}},m} \right)} \right)\varepsilon \left( {{\alpha _{ss}},m} \right)$ is jointly convex in ${\alpha _{cs}}$, ${\alpha _{ss}}$ and $m$.
\end{appendices}

\end{document}